\newcommand*{\email}[1]{%
    \normalsize\href{mailto:#1}{#1}\par
}
\newtheorem{thm}{Theorem}[section]
\newtheorem{lem}[thm]{Lemma}
\newtheorem{prop}[thm]{Proposition}
\newtheorem{cor}[thm]{Corollary}
\begin{document}
\title{Complexity of Partitioning Hypergraphs }
\author[1]{Seonghyuk Im }
\affil[1]{Department of Mathematical Sciences, KAIST  \email{cblimmm@kaist.ac.kr}}

\maketitle
\begin{abstract}
    For a given $\pi=(\pi_0, \pi_1,..., \pi_k) \in \{0, 1, *\}^{k+1}$, we want to determine whether an input $k$-uniform hypergraph $G=(V, E)$ has a partition $(V_1, V_2)$ of the vertex set so that for all $X \subseteq V$ of size $k$, $X \in E$ if $\pi_{|X\cap V_1|}=1$ and $X \notin E$ if $\pi_{|X\cap V_1|}=0$. We prove that this problem is either polynomial-time solvable or NP-complete depending on $\pi$ when $k=3$ or $4$. We also extend this result into $k$-uniform hypergraphs for $k \geq 5$.
\end{abstract}
\section{Introduction}\label{sec1}

For a given $\pi=(\pi_0, \pi_1,..., \pi_k) \in \{0, 1, *\}^{k+1}$, the $\pi$-\textsc{partition} is a decision problem with the following input and output.

\begin{center}
\begin{tabular}{ | m{5em} | m{9cm}| } 
 \hline
 \multicolumn{2}{|c|}{$\pi$-\textsc{partition}} \\ 
 \hline
 input & A $k$-uniform hypergraph $G=(V,E)$  \\ 
 \hline
 output & True if $G$ has a partition $(V_1, V_2)$ of the vertex set so that for all $X \subseteq V$ of size $k$, $X \in E$ if $\pi_{|X\cap V_1|}=1$ and $X \notin E$ if $\pi_{|X\cap V_1|}=0$. False otherwise. \\ 
 \hline
\end{tabular}
\end{center}
We say such partition as a \emph{$\pi$-partition} and a $k$-uniform hypergraph $G$ is \emph{$\pi$-partitionable} if it has a $\pi$-partition.

Some special cases of $\pi$-\textsc{partition} have been studied by many people. Table \ref{table1} shows a list of some cases of $\pi$-\textsc{partition} which has known time complexity.
\begin{table}[h!]
\centering
\begin{tabular}{ | m{6em} | m{4cm}| m{4.5cm} | } 
 \hline
 $\pi$ & $\pi$-partitionable graph & Complexity to decide it \\
 \hline
 $(0, *, 0)$ & 2-colorable graph & Linear \\
 \hline
 $(0, 1, 0)$ & Complete bipartite graph & Linear \\
 \hline
 $(1, *, 0)$ & Split graph & Linear \cite{zbMATH03773637} \\
 \hline
 $(0, *, *, 0)$ & $2$-colorable $3$-uniform hypergraph & NP-complete \cite{MR0363980} \\
 \hline
 $\pi_0=0, \pi_k=1$ and $\pi_\ell=*$ otherwise & Split hypergraph & Polynomial-time solvable \cite{MR2002173} \cite{MR2424844} \\
 \hline
\end{tabular}
\caption{List of $\pi$-\textsc{partition}s with known complexity}
\label{table1}
\end{table}

The CSP dichotomy conjecture implies that the $\pi$-\textsc{partition} is polynomial-time solvable or NP-complete when $\pi$ does not contains $1$. Recently, Zhuk \cite{MR3734241} claims the proof of this conjecture . We will completely classify $\pi$-\textsc{partition} into polynomial-time solvable problems and NP-complete problems for $k=3, 4$ and we will prove some partial result for $k \geq 5$.

Before introducing our results, we observe some trivial facts. First, there exist a trivial partition if $\pi_0=*$ or $\pi_k=*$. For a $\pi \in \{0, 1, *\}^{k+1}$, let $\pi'$ be a vector in $ \{0, 1, *\}^{k+1}$ such that every $1$ of $\pi$ changed into $0$ and every $0$ of $\pi$ changed into $1$. Then by taking complement, it is easy to see that the $\pi$-\textsc{partition} and the $\pi'$-\textsc{partition} are polynomial time equivalent. Therefore, we may assume that $\pi_0=0$ for all $\pi$. Similarly, by changing label of $V_1$ and $V_2$, we get the following fact.
\begin{lem}\label{lem1}
The $(\pi_0, \pi_1, ...,\pi_k)$-\textsc{partition} and the $(\pi_k, \pi_{k-1}, ...,\pi_0)$-\textsc{partition} are polynomial-time equivalent.
\end{lem}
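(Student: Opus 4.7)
The plan is to exhibit a trivial reduction, namely the identity map on the input hypergraph, and verify that it preserves partitionability in both directions. Writing $\pi^R = (\pi_k, \pi_{k-1}, \ldots, \pi_0)$, I claim that a $k$-uniform hypergraph $G$ is $\pi$-partitionable if and only if it is $\pi^R$-partitionable, via the bijection on partitions that swaps the two parts.

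The key observation is that for any $k$-subset $X \subseteq V$ and any partition $(V_1, V_2)$, we have $|X \cap V_2| = k - |X \cap V_1|$. So if I set $W_1 = V_2$ and $W_2 = V_1$, then for each $X$ of size $k$, $\pi^R_{|X \cap W_1|} = \pi^R_{k - |X \cap V_1|} = \pi_{|X \cap V_1|}$. Hence the conditions defining a $\pi$-partition with respect to $(V_1, V_2)$ are literally the same conditions as those defining a $\pi^R$-partition with respect to $(W_1, W_2)$, edge by edge and non-edge by non-edge.

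Therefore, given an oracle for $\pi^R$-\textsc{partition}, I solve an instance of $\pi$-\textsc{partition} by feeding in the same $G$ unchanged; by the observation this returns the correct answer. The reverse reduction is identical since $(\pi^R)^R = \pi$. Both reductions clearly run in constant time (beyond reading the input), so the two problems are polynomial-time equivalent.

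There is essentially no obstacle: the lemma is a bookkeeping remark that formalizes the symmetry between the two parts of the partition. The only thing to be careful about is that the equivalence must hold for the full $\{0,1,*\}$ alphabet, but since the $*$ entries impose no constraint on either side, the correspondence $\pi^R_{k-i} = \pi_i$ handles all three symbols uniformly.
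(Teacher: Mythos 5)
Your proof is correct and matches the paper's argument exactly: the paper justifies this lemma with the one-line remark that one can swap the labels of $V_1$ and $V_2$, which is precisely the bijection $|X\cap V_2|=k-|X\cap V_1|$ you spell out. Nothing is missing.
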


When $k=2$, $(0, *, 0)$-\textsc{partition}, $(0, 1, 0)$-\textsc{partition} and $(0, 0, 1)$-\textsc{partition} are polynomial-time solvable by using DFS(Depth First Search) and $(0, 0, 0)$-\textsc{partition} is clearly polynomial-time solvable. $(0, *, 1)$-\textsc{partition} is equivalent to deciding whether a graph is a split graph. Hammer and Simeone \cite{zbMATH03773637} proved it is polynomial-time solvable. By considering trivial cases and polynomial-time equivalences described above, we get the $\pi$-\textsc{partition} is polynomial-time solvable for all $\pi \in \{0, 1, *\}^3$. 

We proved full dichotomy result when $k=3$ or $4$ and we proved some partial dichotomy result for larger $k$. 
\begin{thm}\label{thm3_uni}
For a $\pi \in \{0, 1, *\}^{4}$, the $\pi$-\textsc{partition} is NP-complete if $\pi=(0, *, *, 0),$ $\pi=(1, *, *, 1),$ $\pi=(0, 0, *, 0),$ $\pi=(0, *, 0, 0)$, $\pi=(1, *, 1, 1)$ or $\pi=(1, 1, *, 1)$ and polynomial-time solvable otherwise.
\end{thm}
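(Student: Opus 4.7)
The plan is to use the symmetries already noted in the excerpt (complement $\pi \leftrightarrow \pi'$, reversal via Lemma~\ref{lem1}, and the trivial cases $\pi_0 = *$ or $\pi_3 = *$) to reduce the $3^4 = 81$ vectors in $\{0,1,*\}^4$ to a short list of representatives with $\pi_0 = 0$ and $\pi_3 \in \{0,1\}$. Each representative must then be placed either in the NP-complete orbit or treated with a polynomial-time algorithm.

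For the NP-hardness direction, $(0,*,*,0)$-\textsc{partition} is hypergraph 2-coloring, NP-complete by Lovász, and $(1,*,*,1)$ follows from complement. The key new reduction is for $(0,*,0,0)$: a $(0,*,0,0)$-partition of $G$ is exactly a bipartition in which every edge has exactly one vertex in $V_1$, i.e., a ``1-in-3 transversal.'' Reducing from positive monotone 1-in-3-\textsc{SAT} (NP-complete by Schaefer's theorem) by identifying variables with vertices and clauses with hyperedges gives NP-completeness. The remaining hard cases $(0,0,*,0)$, $(1,*,1,1)$, $(1,1,*,1)$ follow by Lemma~\ref{lem1} and complement.

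For the polynomial side, the unifying strategy is that for each remaining $\pi$ there is a small \emph{certificate} (typically a vertex or pair of vertices together with its guessed side) from which the whole partition can be read off. For instance, in $(0,1,*,0)$-\textsc{partition} any pair $\{u,v\}\subseteq V_2$ satisfies $\{w : \{u,v,w\}\in E\}=V_1$ exactly, so the algorithm enumerates all $O(n^2)$ such guesses, deduces the candidate partition, and verifies in $O(n^3)$ time; separate short arguments handle $|V_2|\leq 1$. Analogous link-analysis algorithms work for the rigid patterns $(0,0,1,0)$, $(0,1,0,0)$, $(0,1,1,0)$, and $(0,*,1,0)$, while the $\pi_3=1$ patterns reduce essentially to split-hypergraph recognition via the algorithms of \cite{MR2002173,MR2424844} with minor modifications.

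The main obstacle is identifying the right certificate structure for each $\pi$: a systematic analysis of what the pattern $\pi_0\pi_1\pi_2\pi_3$ forces on the link $L(u,v):=\{w:\{u,v,w\}\in E\}$ of a pair $\{u,v\}\subseteq V_i$ is required to pin down one side of the partition. Each case then collapses to a short algorithmic verification plus small-cardinality corner cases ($|V_1|$ or $|V_2|$ at most one). I expect the overall proof to take the form of a compact table listing, for each representative $\pi$, its certificate and verification; the care lies in making sure the table is exhaustive under the imposed symmetries.
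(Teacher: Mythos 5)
Your proposal is correct, and on the crucial hardness case it takes a genuinely different route from the paper. The paper's only new hardness ingredient for $k=3$ is Proposition~\ref{thm24}, proved by a gadget reduction from 3-SAT: variable gadgets built from copies of $C_6^3$, a four-vertex gadget whose unique $(0,*,0,0)$-partition pins down one vertex, and six-vertex clause gadgets. You instead observe that a $(0,*,0,0)$-partition is precisely a set $V_1$ meeting every edge in exactly one vertex, so the identity map (variables to vertices, clauses to hyperedges) is already a reduction from positive monotone 1-in-3-SAT, whose NP-completeness is classical via Schaefer; this is valid and considerably shorter. (The paper half-acknowledges this phenomenon in the remark after Proposition~\ref{thm24} relating $(0,*,0,0)$-\textsc{partition} to exact cover, yet still does the gadget construction.) On the polynomial side your certificate/link analysis is sound --- e.g.\ for $(0,1,*,0)$ the link of any pair inside $V_2$ is exactly $V_1$, and the $\pi_3=1$ cases are handled by enumerating sparse--dense partitions --- but the paper compresses your entire case table into the single uniform statement of Proposition~\ref{thm23}: whenever $\pi$ contains both a $0$ and a $1$, fixing a $(k-i)$-set $U_1\subseteq V_1$ and passing to the link hypergraph $G_{U_1}$ reduces everything to the Feder--Hell--Klein--Motwani enumeration. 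That uniformity is what the paper reuses verbatim for $k=4$ and $5$, so if you want your argument to scale you should phrase the link analysis once in that generality rather than per pattern; the remaining skeleton (trivial cases for $\pi_0=*$ or $\pi_3=*$, complement and Lemma~\ref{lem1} to cut to representatives, Lov\'asz for $(0,*,*,0)$) coincides with the paper's.
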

\begin{thm}\label{thm4_uni}
For a $\pi \in \{0, 1, *\}^{5}$, the $\pi$-\textsc{partition} is NP-complete if $\pi=(0, *, *, *, 0)$, $\pi=(0, *, *, 0, 0)$, $\pi=(0, 0, *, *, 0)$, $\pi=(0, *, 0, 0, 0)$, $\pi=(0, 0, 0, *, 0)$, $\pi=(0, 0, *, 0, 0)$,  $\pi=(1, *, *, *, 1)$, $\pi=(1, *, *, 1, 1)$, $\pi=(1, 1, *, *, 1)$, $\pi=(1, *, 1, 1, 1)$, $\pi=(1, 1, 1, *, 1)$ or $\pi=(1, 1, *, 1, 1)$ and polynomial-time solvable otherwise.
\end{thm}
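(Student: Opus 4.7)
The plan is to extend the strategy of Theorem \ref{thm3_uni} to 4-uniform hypergraphs via systematic case analysis. Applying the symmetry reductions from the introduction (complementation to normalize $\pi_0 = 0$ and Lemma \ref{lem1} to identify each pattern with its reversal), together with the trivial cases $\pi_0 = *$ or $\pi_4 = *$, the $3^5$ candidates collapse into a manageable list of equivalence classes. Up to these symmetries, the NP-complete patterns reduce to four essential cases with $\pi_0 = \pi_4 = 0$: $(0,*,*,*,0)$, $(0,*,*,0,0)$, $(0,*,0,0,0)$, and $(0,0,*,0,0)$; each of the twelve patterns listed in the theorem statement is equivalent to one of these under reversal or complementation.

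For the NP-hardness side, the case $(0,*,*,*,0)$ is precisely 4-uniform hypergraph 2-colorability (Property B), which is NP-complete by extending the classical argument for the $k = 3$ case cited in \cite{MR0363980}. For the remaining three essential patterns, I design gadget reductions from a suitable hard problem, most often from 3-uniform hypergraph 2-colorability ($(0,*,*,0)$-\textsc{partition}) or from NAE-3-SAT. A representative reduction takes a 3-uniform hypergraph $H$ and produces a 4-uniform hypergraph $G$ by attaching a fresh vertex $u_e$ to each edge $e$ of $H$, together with a small auxiliary gadget pinning each $u_e$ to a prescribed part; the gadget is tuned so that the residual constraint on the original vertices of $H$ matches the target $\pi$ on $G$. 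The reductions for $(0,*,0,0,0)$ and especially $(0,0,*,0,0)$ require more elaborate gadgets because their local constraints admit fewer valid configurations.

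For the polynomial-time cases, the algorithm depends on the structure of $\pi$. Split-like patterns (where only $\pi_0$ and $\pi_4$ are non-$*$) reduce to the split hypergraph recognition algorithm of \cite{MR2002173, MR2424844}. For patterns with many $*$ entries arranged so that the partition constraints are ``2-SAT compatible,'' I introduce a Boolean variable $x_v$ for each vertex indicating $v \in V_1$ and translate each edge-present or edge-absent constraint into clauses of width at most two; standard 2-SAT then yields a polynomial algorithm. For patterns with few $*$ entries, the input hypergraph $G$ essentially determines $(V_1,V_2)$ up to polynomially many candidates, which can be enumerated and verified directly; in a handful of cases the complementation trick from the introduction reduces the problem to one already shown polynomial.

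The main obstacle I anticipate is the NP-hardness reduction for $(0,0,*,0,0)$: here every edge must meet $V_1$ in exactly two vertices, so vertex-pinning gadgets are delicate because forcing a vertex into a specific part without creating spurious forbidden edges requires more than the straightforward universal-vertex trick. I expect to rely on a direct reduction from a balanced variant of NAE-3-SAT. A secondary challenge is verifying polynomial tractability for the borderline cases with a single non-$*$ middle entry, where a careful and somewhat voluminous 2-SAT encoding analysis is needed to rule out subtle obstructions.
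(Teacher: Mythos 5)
Your symmetry bookkeeping is correct: up to complementation and the reversal of Lemma~\ref{lem1}, the twelve listed patterns do collapse to the four essential cases $(0,*,*,*,0)$, $(0,*,*,0,0)$, $(0,*,0,0,0)$, $(0,0,*,0,0)$, and the split-like and ``both $0$ and $1$'' patterns are handled as in Corollary~\ref{prop21} and Proposition~\ref{thm23}. But there are two genuine gaps. The first is on the tractability side: you never name the one non-trivial polynomial case among the $1$-free patterns with $\pi_0=\pi_4=0$, namely $(0,*,0,*,0)$, and the tool you propose for patterns with several $*$ entries --- translating each edge constraint into clauses of width at most two and running 2-SAT --- cannot handle it. The constraint there is that every edge meets $V_1$ in an odd number of vertices, i.e.\ $x_a \oplus x_b \oplus x_c \oplus x_d = 1$, a four-variable parity condition that is not expressible in 2-CNF. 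The paper instead sets up a linear system over $\mathds{F}_2$ and applies Gaussian elimination (Proposition~\ref{thm_alternating}); without this, or an equivalent XOR-SAT argument, your implicit classification of $(0,*,0,*,0)$ as polynomial is unsupported and the ``polynomial-time solvable otherwise'' clause of the theorem fails.

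The second gap is that the NP-hardness of $(0,*,0,0,0)$ and $(0,0,*,0,0)$ is only announced, not proved: you write that you ``expect to rely on'' a balanced NAE-3-SAT reduction for the hardest case, which is exactly the part that needs an actual construction. The paper avoids bespoke gadgets here by proving hardness of the single $3$-uniform pattern $(0,*,0,0)$ once (Proposition~\ref{thm24}, via 3-SAT) and then lifting to $k=4$ with two generic padding reductions: Proposition~\ref{thm25}, which adds universal vertices and sends $(0,*,0,0)$ to $(0,*,*,0,0)$ and $(0,*,*,0)$ to $(0,*,*,*,0)$, and Proposition~\ref{thm_add0} together with Corollary~\ref{cor2}, which prepend a $0$ coordinate and yield all single-$*$ patterns $(0,*,0,0,0)$, $(0,0,*,0,0)$, $(0,0,0,*,0)$. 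If you keep your direct-gadget route you must exhibit the reductions explicitly; the pinned-vertex construction you sketch is delicate for $(0,0,*,0,0)$ for precisely the reason you identify, and as written the hardness half of the theorem is incomplete.
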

\begin{thm}\label{thm5_uni}
Let $\pi=(\pi_0, ..., \pi_5) \in \{0, 1, *\}^6$ be a vector. Suppose $\pi_0=0$, $\pi \neq (0, *, 0, 0, *, 0)$, $\pi \neq (0, *, *, 0, *, 0)$, $\pi \neq (0, *, 0, *, *, 0)$, $\pi \neq (0, *, 0, *, 0, 0)$ and $\pi \neq (0, 0, *, 0, *, 0)$. Then the $\pi$-\textsc{partition} is NP-complete if $\pi=(0, *, *, *, *, 0)$, $\pi=(0, *, *, *, 0, 0)$, $\pi=(0, 0, *, *, *, 0)$, $\pi=(0, 0, 0, *, *, 0)$, $\pi=(0, 0, *, *, 0, 0)$, $\pi=(0, *, *, 0, 0, 0)$, $\pi=(0, 0, 0, 0, *, 0)$, $\pi=(0, 0, 0, *, 0, 0)$, $\pi=(0, 0, *, 0, 0, 0)$ or $\pi=(0, *, 0, 0, 0, 0)$ and polynomial-time solvable otherwise.
\end{thm}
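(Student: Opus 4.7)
The plan splits into two halves: establishing NP-hardness of each of the ten listed $\pi$, and designing polynomial-time algorithms for every other $\pi$ with $\pi_0=0$ that is not among the five excluded cases. Before starting, I would exploit Lemma \ref{lem1} together with the $0/1$-complementation observed before the lemma to collapse inequivalent $\pi$ to a shorter list of representatives; this keeps the bookkeeping manageable, since $k=5$ admits many more patterns than $k=3,4$.

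For the NP-hardness direction, I would reduce from cases already proved NP-complete in Theorems \ref{thm3_uni} and \ref{thm4_uni} via padding constructions. The prototypical reduction, from $(0,*,*,*,0)$-\textsc{partition} to $(0,*,*,*,*,0)$-\textsc{partition}, adds a single \emph{sentinel} vertex $u$ pinned to $V_1$ by a small forcing gadget and replaces every $4$-hyperedge $e$ by $e \cup \{u\}$; then $|(e\cup\{u\})\cap V_1|=|e\cap V_1|+1$, so the $*$-pattern of the source is shifted exactly onto the $*$-pattern of $\pi$. Analogous one- and two-sentinel constructions, with the sentinels forced to prescribed sides by gadgets tuned to the positions of $0$'s in $\pi$, handle the other nine listed cases; the reverse direction is always clear because any $\pi$-partition of the constructed instance restricts to a valid partition of the source. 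The cases where $\pi$ contains only two consecutive $*$'s (for example $(0,*,*,0,0,0)$, $(0,0,0,*,*,0)$, $(0,0,*,*,0,0)$) require slightly more elaborate gadgets, but the padding idea persists.

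For the polynomial-time direction, I would group the remaining $\pi$ by the location of the $1$-entries. If $\pi$ has a $1$ at an interior coordinate $1\le i\le 4$, then fixing the sides of a constant number of vertices converts each $5$-hyperedge into a bounded clause on the remaining vertex sides, and the whole problem reduces to $2$-\textsc{sat}, solvable in linear time. If $\pi$ consists entirely of $0$'s and $*$'s with a short block of $*$'s, one side of the partition is forced to have bounded size and can be enumerated in polynomial time. The remaining cases fall under Lemma \ref{lem1} equivalences with problems already settled in Theorems \ref{thm3_uni} and \ref{thm4_uni} or reduce to split-hypergraph recognition. The main obstacle is not individual difficulty but case management: we must check, for every non-excluded $\pi$, that one of these three strategies applies. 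The five excluded cases sit exactly where all three strategies fail - the $*$-positions are too sparse for $2$-\textsc{sat} propagation but too dense for enumeration, and no complementation or reversal symmetry reduces them to a known polynomial case - which is why they are left open.
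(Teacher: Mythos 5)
There is a genuine gap in the NP-hardness half, and it sits exactly in your ``prototypical'' reduction. Pinning a single sentinel $u$ to $V_1$ and replacing each $4$-edge $e$ by $e\cup\{u\}$ shifts the source pattern $(0,*,*,*,0)$ (stars at positions $\{1,2,3\}$) to stars at positions $\{2,3,4\}$, which is a \emph{proper subset} of the star positions $\{1,2,3,4\}$ of the target $(0,*,*,*,*,0)$. The forward implication is fine, but the converse --- which you assert is ``always clear'' --- fails: a $(0,*,*,*,*,0)$-partition of the constructed hypergraph may put only $u$ of some edge $e\cup\{u\}$ into $V_1$ (intersection $1$ is allowed by the target), and then $e$ meets $V_1$ in $0$ vertices, violating the source pattern. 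Pinning $u$ to $V_2$ instead fails symmetrically at intersection $4$. The same defect afflicts every listed case in which the target has strictly more stars than the shifted source, e.g.\ $(0,0,0,*,*,0)$ from $(0,0,0,*,0)$; only the single-star targets escape it. So the reverse direction of your reduction is not clear --- it is false as stated --- and this is where the actual work of the theorem lies.

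The paper's Proposition~\ref{thm25} repairs precisely this: it adds $k+1$ new vertices $u_1,\dots,u_{k+1}$, includes \emph{all} copies $e\cup\{u_i\}$ of each edge, and adds the edge $\{u_1,\dots,u_{k+1}\}$, which (since $\sigma(\pi)_0=\sigma(\pi)_{k+1}=0$) forces some $u_i\in V_1$ and some $u_j\in V_2$. Hence an original edge with $|e\cap V_1|=i$ forces \emph{both} $\sigma(\pi)_i$ and $\sigma(\pi)_{i+1}$ to be $*$, and the hypothesis that $\pi$ has no pattern $*,0,*$ then forces $\pi_i=*$; this is the mechanism your single-sentinel construction lacks. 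The single-star cases are handled by Corollary~\ref{cor2} (via Proposition~\ref{thm_add0} and Lemma~\ref{lem1}), not by ad hoc gadgets. On the polynomial side you also overcomplicate: with $\pi_0=0$ assumed, every non-listed, non-excluded $\pi$ either contains a $1$ (Proposition~\ref{thm23} applies directly), has $\pi_5=*$ (trivially satisfiable by $V_1=V$), or is the all-zero vector; no $2$-\textsc{sat} encoding or bounded-side enumeration is needed, and your $2$-\textsc{sat} claim is itself unsupported since a constraint on a $5$-set is not a $2$-clause.
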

To show these theorems, we first prove the fact that if $\pi$ contains both $0$ and $1$, then the $\pi$-\textsc{partition} is polynomial-time solvable in Subsection \ref{sec22}. We also prove that if $\pi=(0, *, 0, *,..., *, 0)$, then the $\pi$-\textsc{partition} is polynomial-time solvable in the same subsection.
Then it is enough to check $(0, *, 0, 0)$-\textsc{partition} is NP-complete or polynomial-time solvable to get full dichotomy result of $k=3$ cases. We will show NP-completeness of this problem in Subsection \ref{sec23}. After that, we will look some polynomial-time reductions from a larger $k$ to a smaller $k$ so that we complete the proof of the second and third theorem using the cases of $k=3$ in Subsection \ref{sec24}. 
\section{Preliminary}
\subsection{Basic notations}
For a $k$-uniform hypergraph $G=(V, E)$ and a subset of vertices $V' \subseteq V$, a \textit{subgraph induced by $V'$} be a hypergraph $G'$ with the vertex set $V'$ and the edge set $E(G')=\{e \in E(G) \mid e \subseteq V' \}$ and denoted by $G[V']$. If $G[V']$ is an empty graph, we say $V'$ is an \textit{independent set} of $G$. Similarly, if $G[V']$ has edges for every $k$-subset of $V'$, we say $V'$ is a \textit{clique} of $G$.

Let $C_m^k$ be a $k$-uniform hypergraph with a vertex set $\mathds{Z}_m$ and an edge set $\{\{ i, i+1, ..., i+k-1\} \mid 1 \leq i \leq m \}$. It is called a \textit{k-uniform m-cycle}.

\subsection{Known results}\label{sec21}
One important NP-completeness of $\pi$-\textsc{partition} is proved by Lov\'asz \cite{MR0363980} which is NP-completeness of 2-coloring problem of hypergraphs.
\begin{thm}[Lov\'asz \cite{MR0363980}]\label{thm0**0}
For a given hypergraph $G=(V, E)$, the $2$-coloring of $G$ is a function $f:V \rightarrow \{1, 2\}$ such that each edge of $G$ is not monochromatic. Deciding whether a given hypergraph $G$ has $2$-coloring is NP-complete even if $G$ is $3$-uniform hypergraph.
\end{thm}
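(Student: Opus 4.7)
The plan is to establish NP-completeness of $2$-coloring $3$-uniform hypergraphs by polynomial-time reduction; membership in NP is immediate, since any candidate $2$-coloring can be verified in linear time by checking that no hyperedge is monochromatic.

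For hardness the cleanest route is to reduce from \textsc{Monotone NAE-3-SAT} (equivalently, \textsc{Set Splitting} with $3$-element blocks). In such an instance one is given $3$-element subsets $S_1,\dots,S_m$ of a ground set $U$ and asked whether $U$ admits a bipartition so that no $S_j$ lies fully on one side. This is literally the $2$-coloring problem on the $3$-uniform hypergraph $(U,\{S_1,\dots,S_m\})$, so the reduction is the identity map and preserves instance size exactly.

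What then remains is to verify that \textsc{Monotone NAE-3-SAT} is itself NP-complete. I would reduce from $3$-SAT in three stages: (i) introduce a global fresh literal $T$ and replace each clause $(\ell_1\vee\ell_2\vee\ell_3)$ by the NAE $4$-clause $\{\ell_1,\ell_2,\ell_3,T\}$, checking that a truth assignment satisfies the original formula iff its extension by $T=\text{false}$ NAE-satisfies the new formula (up to the global bit-flip symmetry); (ii) reduce NAE-$4$-SAT to NAE-$3$-SAT by the standard splitting trick with one auxiliary variable per clause; and (iii) eliminate negative literals while staying $3$-uniform by replacing each complementary pair $\{x,\bar x\}$ with two fresh positive variables $x^+,x^-$ together with a constant-size monotone NAE-$3$ gadget that forces $x^+\neq x^-$ in every NAE-satisfying assignment.

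I expect the main obstacle to be step (iii). Because any monotone NAE instance is invariant under flipping all truth values, one cannot pin a variable to a specific value and can only encode pairwise (in)equality constraints; producing a compact, correct gadget therefore requires choosing helper variables carefully and verifying correctness by a short case analysis on the color of one anchor vertex. Nonetheless such gadgets are classical in the NAE-SAT literature and can be chosen of constant size, so the entire chain remains polynomial and yields NP-completeness of $2$-coloring for $3$-uniform hypergraphs.
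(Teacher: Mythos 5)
The paper offers no proof of this statement at all: it is imported verbatim from Lov\'asz's paper with a citation, so there is nothing in-text to compare your argument against, and your task is really to supply a self-contained proof of a known result. Judged on its own terms, your reduction chain is the standard one and is sound in outline: membership in NP is trivial, Monotone NAE-$3$-SAT is literally $2$-coloring of $3$-uniform hypergraphs, and the chain $3$-SAT $\to$ NAE-$4$-SAT (global literal $T$, using the flip symmetry to normalize $T=\text{false}$) $\to$ NAE-$3$-SAT (clause splitting) $\to$ Monotone NAE-$3$-SAT is classical. The one place where you defer to folklore is the inequality gadget in step (iii), and since that is the only genuinely nontrivial ingredient you should exhibit it: take helper vertices $a,b,c$ and edges $\{x^+,x^-,a\}$, $\{x^+,x^-,b\}$, $\{x^+,x^-,c\}$, $\{a,b,c\}$. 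If $x^+$ and $x^-$ received the same color, then $a,b,c$ would all be forced to the opposite color, making $\{a,b,c\}$ monochromatic; conversely, once $x^+\neq x^-$ the three edges through the pair are automatically bichromatic and $\{a,b,c\}$ can be colored freely, so the gadget is satisfiable and forces exactly the constraint you need. Two small housekeeping points remain: clauses with repeated or complementary literals can produce ``edges'' of size less than $3$ after the monotone substitution (e.g.\ $(x\vee x\vee x)$ or a clause containing both $x$ and $\bar x$), so you should preprocess these away, much as the paper does for degenerate clauses in its own Proposition~\ref{thm24}; and you should state explicitly that every gadget and every transformed clause has constant size, so the whole construction is polynomial. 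With those details filled in, your proof is complete and, incidentally, uses the same ``force two vertices to opposite sides'' gadget philosophy that the paper itself employs (via $C_6^3$) in its proof of Proposition~\ref{thm24}.
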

Theorem \ref{thm0**0} shows the $(0, *, *, 0)$-\textsc{partition} is NP-complete.

To prove polynomial solvability results in section \ref{sec22}, we will look the theorem by Feder, Hell, Klein and Motwani \cite{MR2002173}. The original theorem was for graphs but the proof of the theorem also holds for hypergraphs.
\begin{thm}[Feder, Hell, Klein and Motwani \cite{MR2002173}]\label{thm1}
Let $\mathcal{S}$ and $\mathcal{D}$ be classes of hypergraphs closed under taking induced subgraph. Suppose there exist a constant $c$ such that every hypergraph $G$ contained in $\mathcal{S} \cap \mathcal{D}$ has at most $c$ vertices. Then, for every $n$-vertex hypergraph $G=(V,E)$, there are at most $n^{2c}$ partitions $(V_1, V_2)$ of $V$ such that $G[V_1] \in \mathcal{S}$ and $G[V_2] \in \mathcal{D}$. Furthermore, we can find all such partitions in $O(n^{2c+2}T(n))$ time where $T(n)$ is the time for recognizing $\mathcal{S}$ and $\mathcal{D}$. 
\end{thm}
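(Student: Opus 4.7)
My plan is to extract both the counting bound and the enumeration algorithm from a single structural lemma about how two valid partitions can differ.

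\textbf{Structural lemma and counting bound.} For any two valid partitions $(V_1,V_2)$ and $(V_1',V_2')$, I would examine the intersection $V_1\cap V_2'$. Since $\mathcal{S}$ is closed under induced subgraphs and $V_1\cap V_2'\subseteq V_1$, we have $G[V_1\cap V_2']\in\mathcal{S}$; symmetrically, $V_1\cap V_2'\subseteq V_2'$ gives $G[V_1\cap V_2']\in\mathcal{D}$. Hence $G[V_1\cap V_2']\in\mathcal{S}\cap\mathcal{D}$, and the hypothesis forces $|V_1\cap V_2'|\leq c$. The same argument yields $|V_1'\cap V_2|\leq c$. Fixing any single valid partition $(V_1,V_2)$ as reference, every other valid partition is obtained by removing a subset of size at most $c$ from $V_1$ and adjoining a subset of size at most $c$ from $V_2$; summing binomial coefficients, the total number of valid partitions is at most $\bigl(\sum_{i=0}^{c}\binom{n}{i}\bigr)^{2}\leq n^{2c}$.

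\textbf{Algorithmic enumeration.} For the algorithm, since no valid reference partition is available in advance, I would enumerate all $O(n^{2c})$ pairs $(S,T)$ of disjoint subsets of $V$ with $|S|,|T|\leq c$. For each pair I would construct a single canonical candidate partition $(V_1^{S,T},V_2^{S,T})$ by running a deterministic completion on the remaining vertices (for instance, scanning them in a fixed order and using the recognizer for $\mathcal{S}$ or $\mathcal{D}$ to force each assignment), then verify validity with two recognizer calls. If the candidate is valid, it is recorded. Budgeting $O(n\,T(n))$ work for the completion and $O(T(n))$ for the verification yields an overall running time of $O(n^{2c+2}T(n))$.

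\textbf{Main obstacle.} The delicate point is proving completeness of this enumeration: every valid partition $(V_1',V_2')$ must arise as $(V_1^{S,T},V_2^{S,T})$ for some enumerated pair. I would try to handle this by identifying $(S,T)$ with the pair of small correction sets predicted by the structural lemma relative to whichever intermediate partition the canonical completion produces, and then proving that the completion driven by $(S,T)$ reconstructs $(V_1',V_2')$ exactly. Designing a completion rule rigid enough that this matching goes through, while still being computable from $(S,T)$ alone within the time budget, is the main technical content of the algorithmic part; the counting bound itself is elementary.
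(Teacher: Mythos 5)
The paper does not actually prove this statement---it is imported verbatim from Feder, Hell, Klein and Motwani with the remark that the graph proof carries over to hypergraphs---so your argument has to stand on its own. The counting half does: your structural lemma is exactly the standard one ($G[V_1\cap V_2']$ lies in $\mathcal{S}\cap\mathcal{D}$, hence has at most $c$ vertices, and symmetrically for $V_1'\cap V_2$), and encoding every valid partition by its two correction sets relative to one fixed valid partition gives the $n^{2c}$ bound. (Pedantically $\sum_{i\le c}\binom{n}{i}$ can exceed $n^{c}$ for $c=1$, but that is cosmetic.)

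The algorithmic half has a genuine gap, which you yourself flag as the ``main obstacle'': you enumerate $O(n^{2c})$ seeds $(S,T)$ and emit one canonical completion per seed, but you never show that every valid partition is the completion of some seed, and with the rule you sketch it will not be. A greedy scan that places $v$ into $V_1$ whenever $G[V_1\cup\{v\}]$ remains in $\mathcal{S}$ can commit $v$ to the wrong side relative to a target partition $(V_1',V_2')$, and the seed cannot repair this because you define $(S,T)$ as the correction sets relative to whatever partition the completion happens to output---a circularity. The clean fix is to abandon seed-and-complete and enumerate incrementally: order the vertices $v_1,\dots,v_n$ and maintain the list $L_i$ of \emph{all} valid partitions of $G[\{v_1,\dots,v_i\}]$. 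Since $\mathcal{S}$ and $\mathcal{D}$ are closed under induced subgraphs, every valid partition of $G[\{v_1,\dots,v_{i+1}\}]$ restricts to a valid partition of $G[\{v_1,\dots,v_i\}]$, so $L_{i+1}$ is obtained by trying $v_{i+1}$ on each side of each member of $L_i$ and keeping the candidates the recognizers accept. Completeness is then automatic, your counting lemma applied to each induced subgraph bounds $|L_i|$ by $i^{2c}$ so the lists never blow up, and the total running time is $O(n^{2c+1}T(n))$, within the claimed budget. As written, your enumeration may simply miss valid partitions, so the ``furthermore'' clause of the theorem is not established.
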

This theorem directly shows the following corollary.
\begin{cor}\label{prop21}
For $\pi=(\pi_0,\pi_1,...,\pi_k) \in \{0, 1, * \}^{k+1}$, if $\pi_0=0$ and $\pi_k=1$, then the $\pi$-\textsc{partition} can be solved in polynomial time. Furthermore, there exist at most $n^{2k}$ $\pi$-partition of an input graph $G$ where $n$ is number of vertices of $G$ and we can find every partition in polynomial time.
\end{cor}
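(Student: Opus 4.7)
The plan is to invoke Theorem \ref{thm1} with the classes dictated by the two extreme entries $\pi_0$ and $\pi_k$, and then to verify the remaining $\pi$-condition by brute force on the polynomial-size family of candidate partitions it returns.

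First I would let $\mathcal{S}$ be the class of $k$-uniform hypergraphs in which every $k$-subset of vertices is an edge (i.e.\ cliques), and let $\mathcal{D}$ be the class of $k$-uniform hypergraphs with no edges (i.e.\ independent sets). Both classes are evidently closed under taking induced subhypergraphs, and membership in either class is recognizable in $O(n^k)$ time by simply comparing $|E(G)|$ with $0$ or with $\binom{|V(G)|}{k}$. Moreover, any hypergraph in $\mathcal{S} \cap \mathcal{D}$ has at most $k-1$ vertices, since a $k$-subset of a member of the intersection would need to be simultaneously an edge and a non-edge. So the hypothesis of Theorem \ref{thm1} holds with $c = k-1$.

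Next I would observe that every $\pi$-partition $(V_1, V_2)$ automatically satisfies $G[V_1] \in \mathcal{S}$ and $G[V_2] \in \mathcal{D}$. Indeed, any $k$-subset $X \subseteq V_1$ has $|X \cap V_1| = k$, so $\pi_k = 1$ forces $X$ to be an edge; and any $k$-subset $X \subseteq V_2$ has $|X \cap V_1| = 0$, so $\pi_0 = 0$ forces $X$ to be a non-edge. Consequently Theorem \ref{thm1} produces, in time $O(n^{2k}T(n))$, a list of at most $n^{2(k-1)} \leq n^{2k}$ partitions, and this list contains every $\pi$-partition of $G$.

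Finally, for each candidate $(V_1, V_2)$ on the list I would loop over the $O(n^k)$ many $k$-subsets $X$ of $V$ on which $\pi_{|X \cap V_1|} \in \{0, 1\}$ and test whether the incidence of $X$ in $E$ matches the required value $\pi_{|X \cap V_1|}$; the graph $G$ is $\pi$-partitionable iff at least one candidate passes this test. There is really no serious obstacle: the only substantive idea is to recognise that $\pi_0 = 0$ and $\pi_k = 1$ respectively force $G[V_2]$ to be edgeless and $G[V_1]$ to be a clique, after which Theorem \ref{thm1} does all the work and the verification step costs only an additional factor polynomial in $n$.
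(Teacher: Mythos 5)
Your proposal is correct and follows the same route as the paper: apply Theorem \ref{thm1} with the clique/independent-set classes to enumerate all $(0,*,\dots,*,1)$-partitions and then filter them against the remaining entries of $\pi$. You supply slightly more detail than the paper (explicitly checking closure, recognizability, and the bound $c=k-1$), and you assign the clique class to $V_1$ and the edgeless class to $V_2$, which is the consistent orientation given $\pi_k=1$ and $\pi_0=0$; otherwise the arguments are identical.
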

\begin{proof}
Take $\mathcal{S}$ be the class of empty hypergraph and $\mathcal{D}$ be the class of complete $k$-uniform hypergraph. In other word, $\mathcal{S}$ is the class of hypergraphs of the form of $G=(V, \emptyset)$ and $\mathcal{D}$ is class of hypergraph of the form of $G=(V, \binom{V}{k})$ where $\binom{V}{k}$ is all $k$-subsets of $V$. Then by applying Theorem \ref{thm1}, we can find every $(0, *, *, ..., *, 1)$-partition of an input $k$-uniform hypergraph $G$ in polynomial time. By finding every $(0, *, *, ..., *, 1)$-partition of an input $k$-uniform hypergraph $G$ and checking each partition is $\pi$-partition or not, we can find every $\pi$-partition of $G$.
\end{proof}

\section{Main results}

\subsection{Polynomial-time solvability}\label{sec22}
We will use Theorem \ref{thm1} and Corollary \ref{prop21} to prove more general fact. 
\begin{prop} \label{thm23}
If $\pi=(\pi_0, ..., \pi_k)$ contains both $1$ and $0$, then the $\pi$-\textsc{partition} is polynomial-time solvable. Furthermore, there exist at most $O(n^{3k})$ $\pi$-partitions for a $n$-vertex hypergraph $G$ and we can find every $\pi$-partition in polynomial-time.
\end{prop}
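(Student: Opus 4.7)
My plan is to reduce the general case to the base case of Corollary \ref{prop21} by guessing a small candidate subset of $V_2$ and descending to an auxiliary hypergraph of lower uniformity. By the reductions listed just before Lemma \ref{lem1}, I may assume $\pi_0 = 0$. If in addition $\pi_k = 1$, Corollary \ref{prop21} already yields all $\pi$-partitions (at most $n^{2k}$ of them) in polynomial time, so the interesting regime is $\pi_k \in \{0,*\}$, which forces the existence of an index $a$ with $1 \le a \le k-1$ and $\pi_a = 1$; I fix one such $a$.

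For each $(k-a)$-subset $U \subseteq V$, I would form the auxiliary $a$-uniform hypergraph $\tilde G_U$ on $V\setminus U$ whose edges are exactly the $a$-subsets $Y$ of $V\setminus U$ with $Y \cup U \in E(G)$. The key observation is that whenever $(V_1,V_2)$ is a $\pi$-partition of $G$ with $U \subseteq V_2$: for every $Y \in \binom{V_1}{a}$ the $k$-set $Y \cup U$ satisfies $|(Y\cup U)\cap V_1| = a$ and hence is an edge of $G$ by $\pi_a = 1$, so $V_1$ is a clique of $\tilde G_U$; and for every $Y \in \binom{V_2\setminus U}{a}$ the $k$-set $Y \cup U$ lies entirely in $V_2$ and hence is a non-edge of $G$ by $\pi_0 = 0$, so $V_2\setminus U$ is an independent set of $\tilde G_U$. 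Thus $(V_1, V_2\setminus U)$ is a $(0,*,\dots,*,1)$-partition of $\tilde G_U$.

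I would then iterate over every $(k-a)$-subset $U \subseteq V$, apply Corollary \ref{prop21} to the $a$-uniform $\tilde G_U$ to list its at most $O(n^{2a})$ candidate split partitions $(V_1, V_2')$, and for each one test in $O(n^k)$ time whether $(V_1, V_2'\cup U)$ is actually a $\pi$-partition of $G$. The boundary case $|V_2| < k-a$, where no valid $U$ exists, is handled by enumerating $V_2$ directly in $O(n^{k-a-1})$ ways. The total number of candidate partitions, and hence of actual $\pi$-partitions, is therefore at most
\[
\binom{n}{k-a}\cdot O(n^{2a}) + O(n^{k-a-1}) \;=\; O(n^{k+a}) \;\le\; O(n^{3k}),
\]
and the whole procedure runs in polynomial time.

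The step I expect to be the main obstacle is the split-structure observation in the middle paragraph: identifying the correct auxiliary $a$-uniform hypergraph $\tilde G_U$ and recognizing that having $\pi_0 = 0$ together with some $\pi_a = 1$ is precisely what forces $V_1$ to be a clique and $V_2 \setminus U$ to be an independent set of $\tilde G_U$, so that Corollary \ref{prop21} can be invoked on a hypergraph of strictly lower uniformity. Once this is in hand, the two-stage enumeration and per-candidate verification are routine.
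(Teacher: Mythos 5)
Your proposal is correct and follows essentially the same route as the paper: fix the index $a$ with $\pi_a=1$, guess a $(k-a)$-subset $U$ on the $\pi_0=0$ side, pass to the $a$-uniform link hypergraph, and invoke the Feder--Hell--Klein--Motwani sparse--dense enumeration (Corollary \ref{prop21}) before verifying each candidate, with a separate brute-force for the small-side case. If anything, your version is the more carefully labelled one, since the paper's text places $U_1$ inside $V_1$ where consistency with $\pi_0=0$ requires it to sit in the part whose $k$-subsets are forced to be non-edges, exactly as you have it.
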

\begin{proof}
Suppose $\pi_0=0$ and $\pi_i=1$ for some $1 \leq i \leq k$.
Assume there exists a $\pi$-partition $(V_1, V_2)$ of an input hypergraph $G=(V, E)$. If $|V_1|\geq k-i$, then for any choice of $U_1 \subseteq V_1$ with $|U_1|=k-i$ and every subset $U_2 \subseteq V_1 \setminus U_1$ of size $i$, $U_1 \cup U_2 \notin E(G)$. Similarly, every subset $U_3 \subseteq V_2$ with $|U_3|=i$ satisfies $U_1 \cup U_3 \in E(G)$. Now, fix $U_1 \subseteq V_1$ and construct an $i$-uniform hypergraph $G_{U_1}$ as follows. (Note that $i$ maybe equal to 1.) 
\begin{center}
$V(G')=V(G) \setminus U_1$ and $U \in E(G')$ if and only if $U \cup U_1 \in E(G)$. 
\end{center}
Then by previous observation, $V_1 \setminus U_1$ is an independent set and $V_2$ is a clique.

Now, we will construct an algorithm. For every $U \subseteq V$ of size $k-i$, construct $G_U$ with respect to $U$. Use the algorithm from Theorem \ref{thm1} to find every $(1, *,...,*,0)$-partition of $G_U$. There are at most $n^{2i}$ $(1, *,...,*,0)$-partitions of $G_U$ and we can find every $(1, *,...,*,0)$-partition in polynomial time. For each $(1, *,...,*,0)$-partition $(V_1, V_2) $ of $V_{G_U}$, check $(V_1 \cup U , V_2)$ makes a $\pi$-partition of $G$ or not.
After check every $U \subseteq V(G)$ of size $k-i$, check every possibility that $V_1$ has size $<k-i$. By previous observation, this algorithm finds all $\pi$-partitions of $G$.
\end{proof}
Now, we will see one more polynomial-time solvable class of $\pi$-\textsc{partition}s. 
\begin{prop}\label{thm_alternating}
Let $\pi = (\pi_0,\pi_1,...,\pi_k) $. If $\pi_i=0$ for all even number $0 \leq i \leq k$ and $\pi_i=*$ for all odd number $0 \leq i \leq k$, then the $\pi$-\textsc{partition} is polynomial-time solvable.
\end{prop}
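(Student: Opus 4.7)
The plan is to reduce $\pi$-\textsc{partition} under this hypothesis to solving a system of linear equations over $\mathbb{F}_2$, which is known to be polynomial-time solvable via Gaussian elimination.

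First, I would unpack what the hypothesis forces. Since $\pi_i = 0$ for every even $i$ and $\pi_i = *$ for every odd $i$, the only restrictions come from the $0$-entries, and a partition $(V_1, V_2)$ is a $\pi$-partition if and only if no $k$-edge $e$ of $G$ has $|e \cap V_1|$ even; equivalently, for every $e \in E$,
\[
|e \cap V_1| \equiv 1 \pmod{2}.
\]
The $*$ entries impose no condition, so this parity constraint is the entire characterization.

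Next, I would encode membership in $V_1$ by indicator variables $x_v \in \mathbb{F}_2$ with $x_v = 1$ iff $v \in V_1$. The condition above rewrites, for each edge $e \in E$, as the linear equation
\[
\sum_{v \in e} x_v = 1 \quad \text{in } \mathbb{F}_2.
\]
Collecting these equations yields a linear system with $|V|$ variables and $|E|$ equations. By construction, any solution $(x_v)_{v \in V}$ produces a valid $\pi$-partition via $V_1 = \{v : x_v = 1\}$, $V_2 = V \setminus V_1$, and conversely every $\pi$-partition yields such a solution.

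Finally, Gaussian elimination over $\mathbb{F}_2$ decides feasibility of the system and, if feasible, produces a solution in $O((|V|+|E|)^3)$ time, giving the desired polynomial-time algorithm. There is no substantive obstacle in this argument; the entire content is the observation that the alternating $0/*$ pattern turns the $\pi$-partition condition into a pure parity condition, which is inherently $\mathbb{F}_2$-linear.
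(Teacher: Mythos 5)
Your proposal is correct and is essentially identical to the paper's own proof: the paper likewise observes that the only constraint is that every edge meets $V_1$ in an odd number of vertices, encodes this as the system $Mx=1$ over $\mathds{F}_2$ with $M$ the edge--vertex incidence matrix, and solves it by Gaussian elimination. No differences worth noting.
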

\begin{proof}
For a given $k$-uniform hypergraph $G=(V, E)$, we label the vertices as $u_1, u_2, ..., u_{|V|}$ and the edges as $e_1, e_2, ..., e_{|E|}$. Let $M$ be an $|E| \times |V|$ matrix in a Galois field $\mathds{F}_2$ where $M_{ij}=1$ if and only if $u_j \in e_i$. If $G$ has a $\pi$-partition $(V_1, V_2)$, then the linear equation $Mx=1$ in $\mathds{F}_2$ has a solution which is $x_i=1$ if and only if $u_i \in V_1$. Conversely, if the linear equation $Mx=1$ has a solution, then the vertex partition $(V_1, V_2)$ of $G$ where $u_i \in V_1$ if and only if $x_i=1$ is a $\pi$-partition of $G$. Since linear equation can be solved in polynomial time by applying Gaussian elimination, the $\pi$-\textsc{partition} is polynomial-time solvable.
\end{proof}

\subsection{NP-completeness of the $(0, *, 0, 0)$-\textsc{partition}}\label{sec23}
To prove Theorem \ref{thm3_uni}, we need one more NP-completeness theorem.
\begin{prop} \label{thm24}
The $(0, *, 0, 0)$-\textsc{partition} is NP-complete.
\end{prop}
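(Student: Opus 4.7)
The plan is to reduce from \textsc{Positive Exactly-1-in-3-SAT}, a classical NP-complete restriction of SAT (Schaefer's dichotomy theorem) in which each clause is a set of three distinct positive variables and the formula is satisfied when exactly one variable per clause is set to true. Membership in NP is trivial: given a candidate partition $(V_1,V_2)$, one iterates over all edges and checks that $|e \cap V_1| \neq 0, 2, 3$ in polynomial time.

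The first step is to unpack the definition. For $\pi = (0, *, 0, 0)$ on a $3$-uniform hypergraph, a $\pi$-partition $(V_1, V_2)$ requires that every edge $e$ satisfies $|e \cap V_1| = 1$, because the three values $0$, $2$, and $3$ are all forbidden by $\pi_0 = \pi_2 = \pi_3 = 0$. Thus the problem becomes: given a $3$-uniform hypergraph $G = (V,E)$, decide whether there exists $V_1 \subseteq V$ such that every hyperedge meets $V_1$ in exactly one vertex. This reformulation makes the connection to $1$-in-$3$ constraints transparent.

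Given this reformulation, the reduction is essentially syntactic. From an instance of \textsc{Positive Exactly-1-in-3-SAT} with variables $x_1, \ldots, x_n$ and clauses $C_1, \ldots, C_m \subseteq \{x_1, \ldots, x_n\}$ with $|C_j| = 3$, form the $3$-uniform hypergraph $G$ with $V = \{x_1, \ldots, x_n\}$ and $E = \{C_1, \ldots, C_m\}$. Then an assignment satisfying the ``exactly one true'' condition on every clause corresponds bijectively to a set $V_1 \subseteq V$ (the set of true variables) such that $|C_j \cap V_1| = 1$ for every $j$, i.e., to a $(0, *, 0, 0)$-partition of $G$. The reduction runs in linear time, completing the proof.

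The main obstacle is really only a cosmetic one: one has to make sure that the standard NP-complete version of \textsc{Positive Exactly-1-in-3-SAT} used has clauses consisting of three \emph{distinct} variables, so that each clause is a genuine $3$-subset and the constructed hypergraph is $3$-uniform. This is the standard formulation arising from Schaefer's theorem, so no extra work is needed; otherwise a short preprocessing step (removing clauses with repeated variables, which are either trivially satisfiable or unsatisfiable) would suffice.
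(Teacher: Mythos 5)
Your proposal is correct, and it takes a genuinely different (and much shorter) route than the paper. You first observe that, since no entry of $\pi=(0,*,0,0)$ equals $1$, the only non-vacuous constraints fall on actual edges, so a $\pi$-partition is exactly a set $V_1$ meeting every edge in precisely one vertex; this reformulation is right. You then note that the problem is literally monotone (positive) Exactly-$1$-in-$3$-SAT with distinct variables per clause, whose NP-completeness is a standard consequence of Schaefer's dichotomy (see also the comment on ONE-IN-THREE 3SAT in Garey and Johnson), and the reduction is the identity map between clauses and hyperedges. Your handling of repeated variables is adequate. The paper instead gives a self-contained reduction from ordinary $3$-SAT, building a variable gadget (a copy of $C_6^3$ forcing $x_i$ and $\neg x_i$ to opposite sides, pinned by an auxiliary triangle-like gadget) and a clause gadget; this is longer but does not lean on the NP-completeness of the more specialized $1$-in-$3$ problem. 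Amusingly, the paper runs your implication in the opposite direction: immediately after the proposition it observes the equivalence with a restricted exact cover problem and uses the proposition to deduce NP-completeness of that restriction, whereas you use the known hardness of the $1$-in-$3$/exact-hitting formulation to prove the proposition. Both are valid; yours buys brevity at the cost of citing a sharper known hardness result, the paper's buys self-containedness from plain $3$-SAT.
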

\begin{proof}
Let $\pi=(0, *, 0, 0)$. First observe that the $3$-uniform hypergraph $H=(V, E)$ with $V=\{1,2,3,4\}$ and $E=\{\{1, 2, 4\}, \{1, 3, 4\}, \{2, 3, 4\} \}$ has the unique $\pi$-partition which is $V_1=\{4\}$ and $V_2=\{1, 2, 3\}$. Also, note that $C^3_6$ has three possible partitions $V_1=\{1, 4\}$, $V_1=\{2, 5\}$ and $V_1=\{3, 6\}$.

We will construct a polynomial-time reduction from the $3$-SAT to the $(0, *, 0, 0)$-\textsc{partition}. Let $\varphi(x_1, x_2,...,x_n)=\bigwedge_{j=1}^m(y_{j}^1 \vee y_{j}^2 \vee y_j^3)$ be an input 3-CNF formula where $y_j^k$ is one of $x_i$ or $\neg x_i$. If there exists a clause of the form of $(x_i \vee x_i \vee x_i)$, by removing clauses containing $x_i$ and removing all $\neg x_i$, we get a smaller 3-CNF formula which has same satisfiability. Therefore, we may assume that there is no clause consisting of three equal terms.

Let $G_\varphi$ be a $3$-uniform hypergraph with the vertex set $\bigcup_{i=1}^{n}\{x_i^1, x_i^2, \neg x_i^1, \neg x_i^2, u_i^1,$ $ u_i^2, u_i^3, u_i^4, u_i^5, u_i^6\} \cup \bigcup_{j=1}^{m} \{w_i^1, w_i^2, w_i^3, w_i^4, w_i^5, w_i^6\}$. If the $j$-th clause of $\varphi$ is of the form of $x_{j_1} \vee x_{j_1} \vee x_{j_2}$ and $j_1 \neq j_2$, let $E_j=\{\{x_{j_1}^1, w_j^1, w_j^2\}, \{{x}_{j_1}^2, w_j^3, w_j^4\},  \{x_{j_2}^1, w_j^5, w_j^6\},$ $ \{w_j^2, w_j^4, w_j^6\}\}$. Similarly, if the $j$-th clause of $\varphi$ is of the form of $x_{j_1} \vee x_{j_2} \vee x_{j_3}$ and $j_1, j_2, j_3$ are all distinct,  let $E_j=\{\{x_{j_1}^1, w_j^1, w_j^2\}, \{{x}_{j_2}^1, w_j^3, w_j^4\},  \{x_{j_3}^1, w_j^5, w_j^6\}, \{w_j^2, w_j^4, w_j^6\}\}$. If a clause contains $\neg x_i$, simply replace $x_i^1, x_i^2$ into $\neg x_i^1, \neg x_i^2$. The edge set of $G_\varphi$ is $\bigcup_{j=1}^{m} E_j \cup \bigcup_{i=1}^{n} \{\{x_{i}^1, \neg x_{i}^1, u_{i}^1\}, \{\neg x_{i}^1, u_{i}^1, x_{i}^2\},$ $ \{u_{i}^1, x_{i}^2, \neg x_{i}^2\},$ $ \{x_{i}^2, \neg x_{i}^2, u_{i}^2\},$ $\{\neg x_{i}^2, u_{i}^2,  x_{i}^1\}, \{u_{i}^2,  x_{i}^1, \neg x_{i}^1\},$  $ \{u_i^1, u_i^2, u_i^4 \},$ $ \{u_i^1, u_i^3, u_i^4\}, \{u_i^2, u_i^3, u_i^4\}\} $. Note that the subgraph of $G_\varphi$ induced by $\{x_{i}^1$, $\neg x_{i}^1$, $u_{i}^1$, $x_{i}^2$, $\neg x_{i}^2$, $u_{i}^2\}$ is isomorphic to the $C_6^3$. 

We claim that $G_\varphi$ has a $\pi$-partition if and only if $\varphi$ is satisfiable. Suppose $G_\varphi$ has a $\pi$-partition $(V_1, V_2)$. Then by previous observation, $u_i^4 \in V_2$ and $u_i^1, u_i^2, u_i^3 \in V_1$. Since   $\{x_{i}^1$, $\neg x_{i}^1$, $u_{i}^1$, $x_{i}^2$, $\neg x_{i}^2$, $u_{i}^2\}$ makes $C_6^3$, $x_i^1, x_i^2 \in V_1$ and $\neg x_i^1, \neg x_i^2 \in V_2$ or $x_i^1, x_i^2 \in V_2$ and $\neg x_i^1, \neg x_i^2 \in V_1$. For each $i$, assign true to $x_i$ if $x_i^1 \in V_2$ and assign false to $x_i$ if $x_i^2 \in V_2$. Suppose the $j$-th clause of $\varphi$ is $x_{j_1} \vee x_{j_2} \vee x_{j_3}$ ($j_i$ may equal). For each $E_j$, $\{w_j^2, w_j^4, w_j^6\} \in E_j$ so exactly one of $w_j^2, w_4^j, w_6^j$ is contained in $V_1$. This implies at least one of $x_{j_i}^k$ is contained in $V_2$. Therefore, at least one of $x_{j_i}$ has true value. If there exists $\neg x_{j_i}$ in a clause, by replacing $x_i^1, x_i^2$ into $\neg x_i^1, \neg x_i^2$, we get the same result. Therefore, assigned value of $x_i$ makes each clause true. So $\varphi$ is satisfiable. Conversely, suppose $\varphi$ is satisfiable. Fix any values of $x_i$'s that make $\varphi$ true. We will construct a $\pi$-partition $(V_1, V_2)$ of $G_\varphi$. If $x_i$ is true, make $x_i^1, x_i^2 \in V_2$ and $\neg x_i^1, \neg x_i^2 \in V_1$. If $x_i$ is false, make $x_i^1, x_i^2 \in V_1$ and $\neg x_i^1, \neg x_i^2 \in V_2$. Suppose the $j$-th clause of $\varphi$ is $x_{j_1} \vee x_{j_2} \vee x_{j_3}$ ($j_i$ may equal). Choose one $i$ such that $x_{j_i}$ is true. Make $w_{2i} \in V_1$, $w_{2i-1} \in V_2$ ,$w_{2k}\in V_2$ for $k \neq j$ and $w_{2k-1}\in V_2$ if $x_{j_k}$ is false, $w_{2k-1}\in V_1$ if $x_{j_k}$ is true for $k \neq j$. Finally, make $u_i^4 \in V_2$ and $u_i^1, u_i^2, u_i^3 \in V_1$. Then we get it is the $\pi$-partition of $G_\varphi$. This process can be done in time polynomial of input size so it gives a  polynomial-time reduction from the 3-SAT to the $\pi$-\textsc{partition}.
\end{proof}
We can use this theorem to prove the fact that the exact cover problem is NP-complete even if $|\{ Y \in S \mid x \in Y \}|=3$ for all $x \in X$. The exact cover problem is a decision problem for a given pair $(X,C)$ where $X$ is a finite set and $C$ is a subset of power set of $X$, deciding there exist a subset $C' \subseteq C$ such that each element in $X$ lies in exactly one member of $C'$. It is known as NP-complete \cite{MR519066}. For a given $3$-uniform hypergraph $G=(V, E)$, let $X=E$ and $S=\{$set of edges containing $v \mid v \in V\}$, we can easily see that the exact cover problem under the condition and the $(0, *, 0, 0)$-\textsc{partition} are polynomial-time equivalent.

Now we will prove Theorem \ref{thm3_uni}.
\begin{proof}[Proof of Theorem \ref{thm3_uni}]
By Proposition \ref{thm23} and considering trivial cases, we can conclude that $\pi$-\textsc{partition} of $3$-uniform hypergraph is polynomial-time solvable except $\pi=(0, *, *, 0)$ ,$\pi=(1, *, *, 1)$, $\pi=(0, 0, *, 0)$, $\pi=(0, *, 0, 0)$, $\pi=(1, *, 1, 1)$ or $\pi=(1, 1, *, 1)$ for $\pi \in \{0, 1, *\}^{4}$. The $(0, *, *, 0)$-\textsc{partition} is NP-complete by Theorem \ref{thm0**0} and the $(0, *, 0, 0)$-\textsc{partition} is NP-complete by Proposition \ref{thm24}. By taking complement and by Lemma \ref{lem1}, the $(1, *, *, 1)$-\textsc{partition} is polynomial-time equivalent to the $(0, *, *, 0)$-\textsc{partition} and the $(1, 1, *, 1)$-\textsc{partition}, the $(1, *, 1, 1)$-\textsc{partition} and the $(0, 0, *, 0)$-\textsc{partition} are polynomial-time equivalent to the $(0, *, 0, 0)$-\textsc{partition}.
\end{proof}
\subsection{Polynomial-time reductions}\label{sec24}

In this subsection, we will prove NP-completeness using polynomial-time reduction. Note that by Theorem \ref{thm23} and by taking complement, we may assume that $\pi$ does not contains $1$. We say $\pi$ is \textit{$1$-free} if $\pi$ does not contains $1$.

First we define a map $\sigma:\{0, *\}^{k+1} \rightarrow \{0, *\}^{k+2}$ as $\sigma(\pi)_0=\pi_0, \sigma(\pi)_{k+1}=\pi_k$ and $\sigma(\pi)_i=*$ if $\pi_i=*$ or $\pi_{i-1}=*$ and $\sigma(\pi)_i=0$ otherwise for $1 \leq i \leq k$. Then we get the following proposition.
\begin{prop} \label{thm25}
If there are no $i$ such that $\pi_i=\pi_{i+2}=*$ and $\pi_{i+1}=0$, there exist a polynomial-time reduction from the $\pi$-\textsc{partition} to the $\sigma(\pi)$-\textsc{partition}.
\end{prop}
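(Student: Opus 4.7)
The plan is to construct $G'$ from $G=(V,E)$ by introducing two new vertices $a$ and $b$. Let $V(G')=V\cup\{a,b\}$, and let $E(G')$ contain the two lifts $e\cup\{a\}$ and $e\cup\{b\}$ for every $e\in E$, together with an auxiliary ``forcing gadget'' of $(k+1)$-edges on $\{a,b\}\cup V$ chosen so that in any $\sigma(\pi)$-partition of $G'$ the vertices $a$ and $b$ lie in opposite parts.

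For the forward direction, given a $\pi$-partition $(V_1,V_2)$ of $G$, I would put $a\in V_1'$ and $b\in V_2'$. For each edge $e\cup\{a\}$ of $G'$ we have $|e\cup\{a\}\cap V_1'|=1+|e\cap V_1|$, and since $\pi$ is $1$-free and $e\in E$, $\pi_{|e\cap V_1|}=*$; by the definition of $\sigma$ this gives $\sigma(\pi)_{1+|e\cap V_1|}=*$. The analogous check for $e\cup\{b\}$ uses $|e\cup\{b\}\cap V_1'|=|e\cap V_1|$ and $\sigma(\pi)_{|e\cap V_1|}=*$. For $(k+1)$-subsets not in $E(G')$ there is no constraint to violate, since $\sigma(\pi)$ is $\{0,*\}$-valued, and the forcing gadget will be designed to be consistent with this placement.

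For the backward direction, given a $\sigma(\pi)$-partition $(V_1',V_2')$ of $G'$, the forcing gadget ensures $a\in V_1'$ and $b\in V_2'$ (up to the symmetry of Lemma~\ref{lem1}). Set $V_1=V_1'\setminus\{a\}$ and $V_2=V_2'\setminus\{b\}$. For each $k$-subset $Y\subseteq V$ with $j=|Y\cap V_1|$ and $\pi_j=0$, I must show $Y\notin E$. Since $Y\in E$ iff $Y\cup\{a\}\in E(G')$ iff $Y\cup\{b\}\in E(G')$, with $|Y\cup\{a\}\cap V_1'|=j+1$ and $|Y\cup\{b\}\cap V_1'|=j$, it suffices that either $\sigma(\pi)_{j+1}=0$ or $\sigma(\pi)_j=0$. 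For internal $j$ (with $1\leq j\leq k-1$) the hypothesis yields $\pi_{j-1}=0$ or $\pi_{j+1}=0$, giving $\sigma(\pi)_j=0$ or $\sigma(\pi)_{j+1}=0$ respectively. The boundary cases use $\sigma(\pi)_0=\pi_0=0$ when $j=0$ and $\sigma(\pi)_{k+1}=\pi_k$ when $j=k$.

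The hard part will be designing and verifying the forcing gadget. The idea is to exploit the guaranteed zero $\sigma(\pi)_0=0$ together with another zero of $\sigma(\pi)$ (either $\sigma(\pi)_{k+1}=\pi_k$ when $\pi_k=0$, or an internal zero produced by the consecutive zeros that the hypothesis forces in any nontrivial $\pi$) to rule out configurations where $a,b$ lie on the same side of a $\sigma(\pi)$-partition. In the degenerate parameter regimes where no such second zero is available, the $\pi$-\textsc{partition} is itself trivially solvable (by $V_1=V$ or $V_1=\emptyset$), so the reduction can be completed by outputting a fixed trivial instance of $\sigma(\pi)$-\textsc{partition}.
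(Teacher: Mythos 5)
Your overall strategy---lift every edge of $G$ into $G'$ by new vertices so that each original edge $Y$ with $j=|Y\cap V_1|$ appears with intersection sizes both $j$ and $j+1$, then use the hypothesis to conclude that $\sigma(\pi)_j=\sigma(\pi)_{j+1}=*$ forces $\pi_j=*$---is exactly the paper's, and your index-chasing in both directions is correct. But the entire technical burden of your version of the reduction sits in the ``forcing gadget'' that guarantees $a$ and $b$ lie on opposite sides, and you never construct it. This is a genuine gap, not a routine detail to be deferred: for a general $1$-free $\sigma(\pi)$ it is not clear that a hypergraph with two distinguished vertices forced to opposite sides in every $\sigma(\pi)$-partition exists at all (already for $\sigma(\pi)=(0,*,\dots,*,0)$ this is an ``unequal'' gadget for hypergraph $2$-colouring, which takes real work), and your plan to build it out of $(k+1)$-edges on $\{a,b\}\cup V$ is additionally dangerous because such edges impose fresh constraints on the original vertices of $V$ and can break the forward direction. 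Without the gadget, if $a$ and $b$ land on the same side, the two lifts of $Y$ give only one intersection size and you cannot conclude $\pi_j=*$.

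The paper sidesteps the need for any such gadget. It adds $k+1$ new vertices $u_1,\dots,u_{k+1}$, lifts every edge of $E$ by \emph{each} of them, and adds the single edge $\{u_1,\dots,u_{k+1}\}$. Since $\sigma(\pi)_0=\pi_0$ and $\sigma(\pi)_{k+1}=\pi_k$ are both $0$ in the cases where the reduction is applied, that one edge cannot be monochromatic, so by pigeonhole some pair $u_i\in V_1'$, $u_j\in V_2'$ exists; and because every original edge is lifted by \emph{all} of the $u_\ell$, both intersection sizes $j$ and $j+1$ occur among its lifts automatically. To repair your proof you should either exhibit the forcing gadget explicitly (using fresh auxiliary vertices rather than $V$, and verifying it admits a placement compatible with your forward direction) or replace the two-vertex device with the paper's $(k+1)$-vertex pigeonhole construction.
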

\begin{proof}
For a given $k$-uniform hypergraph $G=(V,E)$, we construct a $(k+1)$-uniform hypergraph $G'=(V', E')$ with the vertex set
$V'=V \cup \{u_1, u_2, ..., u_{k+1} \}$ and the edge set $E'=\{ e \cup \{u_i\} \mid 1 \leq i \leq k+1, e \in E \} \cup \{u_1, ..., u_{k+1}\} $. If $G$ has a $\pi$-partition $(V_1, V_2)$, then the partition $(V_1 \cup \{u_1\}, V_2 \cup \{u_2,...,u_{k+1}\})$ is a $\sigma(\pi)$-partition of $G'$. Conversely, if $G'$ has a $\sigma(\pi)$-partition $(V_1, V_2)$, there are at least one pair of integers $(i, j)$ satisfying $u_i \in V_1$ and $u_j \in V_2$. Therefore, for each $\{w_1, w_2, ...,w_k\} \in E$, $|\{w_1, w_2, ...,w_k\} \cap V_1|=i$ implies that $\sigma(\pi)_i=\sigma(\pi)_{i+1}=*$. By assumption, it implies that $\pi_i=*$ so $(V_1 \setminus \{u_1, ..., u_{k+1}\}, V_2 \setminus \{u_1, ..., u_{k+1}\})$ is a $\pi$-partition of $G$. 
\end{proof}
We can use this proposition and Theorem \ref{thm0**0} to prove that $2$-coloring problem of $k$-uniform hypergraph is NP-complete for all $k \geq 3$.

Note that number of 0's in $\sigma(\pi)$ is smaller than $\pi$. On the other hand, the next proposition produce another reduction from the $\pi$-\textsc{partition} to the $\pi'$-\textsc{partition} such that number of 0's in $\pi'$ is strictly larger than number of 0's in $\pi$. 
\begin{lem}
Suppose $\pi=(\pi_0, ..., \pi_k)$ and $\pi'=({\pi'}_0, ..., {\pi'}_{k'})$ are $1$-free and there are no consecutive *'s in $\pi'$. If there exist nonnegative integers $j_1, j_2, ..., j_k$ such that ${\pi'}_{\sum_{m \in I} j_m}=*$ if $\pi_{|I|}=*$ and ${\pi'}_{\sum_{m \in I} j_m}=0$ if  $\pi_{|I|}=0$ for all $I \subseteq \{1, 2, ..., k\}$, then there exist a polynomial-time reduction from the $\pi$-\textsc{partition} to the $\pi'$-\textsc{partition}.
\end{lem}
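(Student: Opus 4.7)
The plan is to build a polynomial-time reduction by blowing up each vertex of the input $k$-uniform hypergraph $G=(V,E)$ into several copies and simulating each $e\in E$ by all ways of choosing $j_m$ copies at each position $m$. Concretely, I would set $J=1+\max_{m} j_m$, take $V'=\{v^{(l)}: v\in V,\ 1\le l\le J\}$, and for every $e=\{v_1,\dots,v_k\}\in E$ (with some fixed ordering of its vertices) and every tuple $(S_1,\dots,S_k)$ with $S_m$ a $j_m$-subset of $\{v_m^{(1)},\dots,v_m^{(J)}\}$, put $\bigcup_m S_m$ into $E'$; nothing else is in $E'$. Both $|V'|$ and $|E'|$ are polynomial in $|V|+|E|$ because $J$ and $\prod_m \binom{J}{j_m}$ are absolute constants depending only on $\pi$ and $\pi'$.

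For the forward direction, assuming $(V_1,V_2)$ is a $\pi$-partition of $G$, I would place every copy $v^{(l)}$ on the same side as $v$. A simulation edge $e'$ built from $e=\{v_1,\dots,v_k\}\in E$ then satisfies $|e'\cap V'_1|=\sum_{m\in I}j_m$ with $I=\{m: v_m\in V_1\}$, and since $\pi$ is $1$-free and $e\in E$ we have $\pi_{|I|}=*$, so the hypothesis yields $\pi'_{|e'\cap V'_1|}=*$. A $k'$-subset not in $E'$ only needs $\pi'$ at its count to be different from $1$, which is automatic since $\pi'$ is $1$-free. Thus $(V'_1,V'_2)$ is a $\pi'$-partition of $G'$.

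The heart of the argument is the backward direction, where I plan to use the ``no two consecutive $*$'s'' assumption as the rigidity lever. Given a $\pi'$-partition $(V'_1,V'_2)$ of $G'$ and writing $a_v=|\{l: v^{(l)}\in V'_1\}|$, for each fixed $e=\{v_1,\dots,v_k\}\in E$ the count $|e'\cap V'_1|$ sweeps every integer value in $\bigl[\sum_m\max(0,j_m+a_{v_m}-J),\ \sum_m\min(j_m,a_{v_m})\bigr]$ as the tuples $(S_1,\dots,S_k)$ vary, and each value must land at a $*$-position of $\pi'$; since no two consecutive positions of $\pi'$ are $*$, this interval collapses to a single point. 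This forces $a_{v_m}\in\{0,J\}$ for each $m$ with $j_m\ge 1$, i.e.\ all copies of any such $v_m$ lie on a single side. Defining $V_1=\{v: v^{(1)}\in V'_1\}$, I then expect to recover a $\pi$-partition of $G$: for each $e\in E$, $|e'\cap V'_1|=\sum_{m\in I}j_m$ with $I=\{m:v_m\in V_1\}$ sits at a $*$-position of $\pi'$, so the hypothesis gives $\pi_{|e\cap V_1|}=*$, and non-edges of $G$ are vacuous because $\pi$ is $1$-free. The main remaining obstacle is cleanly handling vertices appearing only at positions with $j_m=0$; for those, applying the hypothesis to $I$ and $I\cup\{m\}$ shows $\pi_{|I|}=\pi_{|I|+1}$, so such a vertex's side assignment does not affect the validity of the $\pi$-partition.
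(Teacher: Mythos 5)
Your construction is correct, but it takes a genuinely different route from the paper's. The paper blows each vertex $u$ up into $k'$ copies $u^1,\dots,u^{k'}$ plus $k'-1$ auxiliary vertices $w_u^1,\dots,w_u^{k'-1}$ and adds the $k'$ gadget edges $\{w_u^1,\dots,w_u^{k'-1},u^i\}$; if the copies were split between the two sides, two such edges would have $V_1'$-counts differing by exactly one, so the no-consecutive-$*$ hypothesis forces all copies of $u$ to one side, and each edge of $G$ is then simulated by a \emph{single} edge using the first $j_m$ copies of its $m$-th vertex. You instead use $J=1+\max_m j_m$ copies, no auxiliary vertices, and \emph{all} $\binom{J}{j_1}\cdots\binom{J}{j_k}$ simulation edges per edge of $G$, extracting the same rigidity from the observation that the achievable counts $|e'\cap V_1'|$ form an integer interval that must collapse to a point; this is a clean alternative lever on the same hypothesis. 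Your route also avoids the paper's separate handling of the all-zero $\pi'$ (needed there to choose an index $m$ with $\pi'_m=*$ when placing the $w_u^i$), and your final worry about positions with $j_m=0$ dissolves on inspection: once the copies at positions with $j_m\ge 1$ are forced together, \emph{every} simulation edge of $e$ has count exactly $\sum_{m\in I}j_m$ with $I=\{m:v_m\in V_1\}$, so $\pi_{|I|}=0$ would contradict the hypothesis directly, regardless of how the $j_m=0$ vertices are assigned. One thing the paper's version buys that yours does not: its equality gadget is modular, and the author explicitly reuses the remark that the no-consecutive-$*$ condition can be replaced by any other gadget forcing two vertices to the same side (see the discussion before Proposition \ref{thm_add0}), whereas your rigidity is entangled with the simulation edges themselves. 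Finally, note that both your proof and the paper's implicitly assume $\sum_m j_m=k'$ so that the simulation edges are genuine $k'$-sets; this holds in the lemma's only application (Proposition \ref{cor1}, where $j_m=2$ and $k'=2k$).
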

\begin{proof}
If $\pi'$ is a zero vector, then $\pi$ is also a zero vector so clearly there are polynomial-time reduction. Therefore, we may assume that $\pi'$ is not a zero vector.  

For a given $k$-uniform hypergraph $G=(V,E)$, construct a $k'$-uniform hypergraph $G'$ as follows. 
\begin{align*}
V' &= \{u^i \mid u \in V, 1 \leq i \leq k' \} \cup \{w_u^i \mid u \in V, 1 \leq i \leq k'-1\} \\
E' &= \{\{w_u^1, w_u^2, ..., w_u^{k'-1}, u^i\} \mid u \in V, 1 \leq i \leq k'\}\\ &\cup  \{\{u_1^1, u_1^2, ..., u_1^{j_1}, u_2^1, ..., u_2^{j_2}, ..., u_k^{j_k}\} \mid \{u_1, u_2, ..., u_k\} \in E \}
\end{align*}
Suppose $G'$ has a $\pi'$-partition $(V'_1, V'_2)$. If $u^1 \in V'_1$, then $u^i \in V'_1$ for all $i$ because if not, $|\{w_u^1, w_u^2, ..., w_u^{k'-1}, u^1\} \cup V'_1|=|\{w_u^1, w_u^2, ..., w_u^{k'-1}, u^i\} \cup V'_1|+1$. It is contradicting to the assumption. By the same argument, if $u^1 \in V'_2$, then $u^i \in V'_2$ for all $i$ Let $({V}_1, {V}_2)$ be a partition of $V$ where $u \in {V}_1$ if and only if $u^1 \in V'_1$. For every edge $\{u_1, u_2, ..., u_k\} \in E$, if $|\{u_1, u_2, ..., u_k\} \cap V_1| = s$, then there exists $I \subseteq \{1, 2, ..., k\}$ such that $|I|=s$ and ${\pi'}_{\sum_{m \in I} j_m}=*$. It implies $\pi_i=*$ because if not, then ${\pi'}_{\sum_{m \in I} j_m}=0$ for all $I \subseteq \{1, 2, ..., k\}$ with $|I|=s$. Therefore, $(V_1, V_2)$ is a $\pi$-partition of $G$. Conversely, assume $G$ has a $\pi$-partition $(V_1, V_2)$. We choose any $m$ such that $\pi'_m=*$. Let $(V'_1, V'_2)$ as $u^i \in V'_1$ if and only if $u \in V_1$, $w_u^i \in V'_1$ for $1 \leq i \leq m-1$, $w_u^i \in V'_2$ for $m+1 \leq i \leq k'-1$ and $w_u^m \in V'_1$ if and only if $u \in V_2$. Then it is a $\pi'$-partition of $G'$ since $\pi_i=*$ implies ${\pi'}_{\sum_{m \in I} j_m}=*$ for all $I \subseteq \{1, 2, .., k\}$ with $|I|=i$ and $|\{w_u^1, w_u^2, ..., w_u^{k'-1}, u^i\} \cap V'_1|=m$ for any $1 \leq i \leq k', u \in V$.
\end{proof}
It is hard to check whether such $j_1, j_2, ..., j_k$ exist or not. However, by taking all $j_m=2$, we get the following useful proposition.
\begin{prop}\label{cor1}
Suppose $\pi=(\pi_0, ..., \pi_k)$ is 1-free. Then there exist a polynomial-time reduction from the $\pi$-\textsc{partition} to the $(\pi_0, 0, \pi_1, 0, \pi_2, 0, ..., 0, \pi_k)$-\textsc{partition}.
\end{prop}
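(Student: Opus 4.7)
The plan is to invoke the preceding lemma with the particular choice $j_1 = j_2 = \cdots = j_k = 2$, which collapses the sum $\sum_{m \in I} j_m$ into the single quantity $2|I|$ and hence reduces the arrangement condition on $\pi'$ to one requirement per coordinate. Under this choice, the target vector $\pi' = (\pi_0, 0, \pi_1, 0, \pi_2, 0, \ldots, 0, \pi_k)$ is built precisely so that $\pi'_{2s} = \pi_s$ for $0 \le s \le k$, which matches $\pi_{|I|}$ whenever $|I| = s$. The uniformity of the resulting hypergraph problem is $k' = 2k$, and the hypergraph construction that the lemma's proof produces will blow each vertex up into $k'$ copies plus $k'-1$ auxiliary witnesses per vertex, so the reduction is polynomial in the input size.

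To apply the lemma, I need to verify its three structural hypotheses. The 1-freeness of $\pi'$ is immediate, since its entries are either entries of $\pi$ (which is 1-free by assumption) or $0$. The absence of consecutive $*$'s in $\pi'$ holds because every odd-indexed coordinate $\pi'_{2s+1}$ equals $0$ by construction, so no two $*$'s can be adjacent. The compatibility condition on the $j_m$'s—namely, that $\pi'_{\sum_{m \in I} j_m}$ equals $*$ (respectively $0$) exactly when $\pi_{|I|}$ does—reduces to $\pi'_{2|I|} = \pi_{|I|}$, which is true by construction of $\pi'$.

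Since all three hypotheses of the lemma are satisfied, its conclusion yields exactly the desired polynomial-time reduction from $\pi$-\textsc{partition} to $(\pi_0, 0, \pi_1, 0, \ldots, 0, \pi_k)$-\textsc{partition}, and no additional argument is required. There is no real obstacle in this proposition; the substance is entirely carried by the preceding lemma, and the only insight here is recognizing that the uniform choice $j_m = 2$ simultaneously makes the sum $\sum_{m \in I} j_m$ depend only on $|I|$ and forces the target vector to have alternating $0$'s, which automatically rules out consecutive $*$'s. In later applications this proposition will be used to push NP-completeness from small-$k$ instances to larger uniformities by doubling $k$ and padding with $0$'s.
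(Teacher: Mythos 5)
Your proposal is correct and matches the paper's intent exactly: the paper derives this proposition from the preceding lemma precisely by taking $j_1=\cdots=j_k=2$, so that $\sum_{m\in I}j_m=2|I|$ and $\pi'_{2s}=\pi_s$, with the interleaved $0$'s guaranteeing that $\pi'$ has no consecutive $*$'s. Your verification of the lemma's hypotheses is the whole argument, and it is the same one the paper relies on.
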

We observe that no consecutive *'s condition is only needed for making $u^1, u^2, ..., u^{k'}$ belong to the same side. Therefore, for a vector $\pi'$, if there exist a hypergraph $H$ and vertices $u_1, u_2 \in V(H)$ such that it has at least one $\pi'$-partition and for every $\pi'$-partition of $H$, $u_1$ and $u_2$ lie in same side, then the lemma also holds without no consecutive *'s condition. This observation also holds for the next proposition. On the other hand, since $(\pi_0, 0, \pi_1, 0, ..., 0, \pi_k)$ has no consecutive *'s, Proposition \ref{cor1} make it easy to apply Lemma \ref{thm25} and the next proposition. 
\begin{prop}\label{thm_add0}
Suppose $\pi=(\pi_0, \pi_1, ..., \pi_k)$ is 1-free and the has no consecutive *'s. Then there exists a polynomial-time reduction from the $\pi$-\textsc{partition} to the $(0, \pi_0,  \pi_1,  ...,  \pi_k)$-\textsc{partition}.
Furthermore, if $(0, \pi_0, \pi_1, ..., \pi_k)=(\pi_k, \pi_{k-1}, ..., \pi_0, 0)$, then it also holds without no consecutive * condition. 
\end{prop}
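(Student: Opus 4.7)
The plan is to build a $(k+1)$-uniform hypergraph $G'$ from $G=(V,E)$ by introducing $k+1$ fresh auxiliary vertices $u^1, \ldots, u^{k+1}$, setting $V(G') = V \cup \{u^1, \ldots, u^{k+1}\}$, and placing $e \cup \{u^i\}$ into $E(G')$ for every $e \in E$ and $1 \le i \le k+1$. Write $\pi' = (0, \pi_0, \ldots, \pi_k)$. The first step is a same-side forcing: since $\pi$ has no consecutive *'s, neither does $\pi'$, so for any $e \in E$ and $i \neq j$ the two edges $e \cup \{u^i\}, e \cup \{u^j\} \in E(G')$ have $V_1$-intersections that differ by one whenever $u^i$ and $u^j$ lie on opposite sides of a $\pi'$-partition, which would force $\pi'$ to be $*$ at two consecutive positions --- a contradiction. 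Hence every $\pi'$-partition of $G'$ puts all the $u^i$'s on a single common side.

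When all $u^i \in V_1$ the identity $\pi'_{|e \cap V_1| + 1} = \pi_{|e \cap V_1|}$ shows that $(V_1 \cap V, V_2 \cap V)$ is a $\pi$-partition of $G$, and conversely every $\pi$-partition of $G$ lifts to a $\pi'$-partition of $G'$ by assigning all $u^i$ to $V_1$. To rule out the remaining all-$u^i \in V_2$ configuration I would add a short anchor consisting of one further $(k+1)$-edge on the $u^i$'s together with a constant number of fresh auxiliary vertices, tailored to $\pi$: the anchor must be compatible with the all-$V_1$ side (so the forward direction still lifts) while making the all-$V_2$ side impossible to complete to a $\pi'$-partition. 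When $\pi_k = *$, adding the single edge $\{u^1, \ldots, u^{k+1}\}$ already suffices because $\pi'_0 = 0$ kills the all-$V_2$ side while $\pi'_{k+1} = \pi_k = *$ allows the all-$V_1$ side. When $\pi_k = 0$ I would instead attach a short triangle-style substructure inspired by Proposition \ref{thm24}, exploiting the largest $*$-position of $\pi$ together with $\pi'_0 = \pi'_1 = 0$ to block the all-$V_2$ side.

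For the furthermore statement, the reversal-symmetry $(0, \pi_0, \ldots, \pi_k) = (\pi_k, \ldots, \pi_0, 0)$ is equivalent to $\pi_j = \pi_{k-1-j}$ for $0 \le j \le k-1$ together with $\pi_k = 0$. Then the all-$u^i \in V_2$ configuration is no longer pathological: from it we deduce $\pi_{|e \cap V_1| - 1} = *$ for every $e \in E$, and the symmetry immediately translates this to $\pi_{k - |e \cap V_1|} = *$, so $(V_2 \cap V, V_1 \cap V)$ is itself a $\pi$-partition of $G$. In this regime the anchor gadget is unnecessary, and by the observation preceding this proposition the same-side forcing of $u^1, \ldots, u^{k+1}$ can also be achieved without the no-consecutive-*'s hypothesis by using the explicit auxiliary hypergraph described there, with two distinguished vertices that always coincide.

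I anticipate that the main obstacle will be the anchor construction in the $\pi_k = 0$ regime: the anchor must be simultaneously compatible with $\pi'$ on the all-$V_1$ side and obstructive on the all-$V_2$ side, which forces a careful case analysis on the positions of the *'s in $\pi$ subject to the no-consecutive-*'s constraint.
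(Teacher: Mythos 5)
Your overall architecture is the same as the paper's: attach every original edge to auxiliary vertices, use the no-consecutive-$*$ hypothesis to force the auxiliaries onto a single side, read off a $\pi$-partition when that side is $V_1$, and in the reversal-symmetric case dispose of the $V_2$ alternative by swapping the two parts. (The paper's ``furthermore'' argument uses a single auxiliary vertex $u$, which makes any same-side forcing --- and hence your appeal to an unspecified auxiliary hypergraph with two coinciding vertices --- unnecessary there.) Your anchor for the subcase $\pi_k=*$, namely the extra edge $\{u^1,\dots,u^{k+1}\}$ killed at position $0$ by $\pi'_0=0$ and permitted at position $k+1$ by $\pi'_{k+1}=\pi_k=*$, is correct.

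The genuine gap is the subcase $\pi_k=0$ with $(0,\pi_0,\dots,\pi_k)\neq(\pi_k,\dots,\pi_0,0)$, which you leave as ``a short triangle-style substructure inspired by Proposition \ref{thm24}'' to be found. This is not a routine detail: it is exactly the subcase the paper needs for Corollary \ref{cor2} (e.g.\ $\pi=(0,0,*,0)$, $\pi'=(0,0,0,*,0)$), and no anchor can exist when $\pi'$ equals its own reversal (any gadget completable on the all-$V_1$ side is completable on the all-$V_2$ side by swapping parts), so the construction must use the asymmetry of $\pi'$ quantitatively rather than qualitatively. The paper does this as follows: since $\pi_k=0$ and $(0,\pi)$ is not its own reversal, there is an $m$ with $\pi_m=*$ and $\pi_{k-m-1}=0$. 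Take two disjoint fresh $k$-sets $U,U'$ and fresh vertices $w_1,\dots,w_{k+1}$; the edges $U\cup\{w_i\}$ for $1\le i\le m+1$ glue $w_1,\dots,w_{m+1}$ onto one side, the edges $U'\cup\{w_i\}$ for $m+2\le i\le k+1$ glue $w_{m+2},\dots,w_{k+1}$ onto one side, and therefore the additional edge $\{w_1,\dots,w_{k+1}\}$ meets $V_1$ in $0$, $m+1$, $k-m$ or $k+1$ vertices; since $\pi'_0=\pi'_{k-m}=\pi'_{k+1}=0$ while $\pi'_{m+1}=*$, the only admissible option puts $w_1,\dots,w_{m+1}$ in $V_1$. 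Attaching the original edges as $e\cup\{w_1\}$ to the single forced vertex $w_1$ then completes the reduction, with $|U\cap V_1|=m$ and $|U'\cap V_1|=m+1$ giving the converse lifting. Until you supply an anchor of this kind, your reduction is incomplete in precisely the subcase the rest of the paper relies on.
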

\begin{proof}
If $(0, \pi_0, \pi_1, ..., \pi_k) = (\pi_k, \pi_{k-1}, ..., \pi_0, 0)$, for a given $k$-uniform hypergraph $G=(V, E)$, we construct a $k+1$-uniform hypergraph $G'=(V', E')$ with the vertex set $V'=V \cup \{u\}$ and the edge set $E'=\{e \cup u \mid e \in E\}$. If $G$ has a $\pi$-partition $(V_1, V_2)$, then $(V_1 \cup \{u\}, v_2)$ is a $(0, \pi_0, \pi_1, ..., \pi_k)$-partition of $G'$. Conversely, suppose $G'$ has a $(0, \pi_0, \pi_1, ..., \pi_k)$-partition $(V'_1, V'_2)$. If $u \in V'_1$, then $(V'_1 \setminus \{u\}, V'_2)$ is a $\pi$-partition of $G$. If $u \in V'_2$, then $(V'_1 , V'_2\setminus \{u\})$ is a $(0, \pi_0, \pi_1, ...,\pi_{k-1})$-partition of $G$ so $(V'_2\setminus \{u\}, V'_1)$ is a $(\pi_{k-1}, ..., \pi_0, 0)$-partition of $G$ and $(\pi_{k-1}, ..., \pi_0, 0)=\pi$.

If $(0, \pi_0, \pi_1, ..., \pi_k) \neq (\pi_k, \pi_{k-1}, ..., \pi_0, 0)$, we choose $m$ such that $\pi_m=*$ but $\pi_{k-m-1}=0$. We define a $(k+1)$-uniform hypergraph $G'=(V', E')$ with the vertex set $V'=V \cup \{u_1, ..., u_{2k}, w_1, ..., w_{k+1}\}$ and the edge set $E'=\{ \{u_1, u_2,...,u_{k}, w_i \} \mid 1 \leq i \leq m+1\} \cup \{ \{u_{k+1}, u_{k+3},...,u_{2k}, w_i \} \mid m+2 \leq i \leq k+2\} \cup \{ e \cup \{ w_1\} \mid e \in E\} \cup \{\{w_1, w_2,...,w_{k+1}\}\}$. 

Suppose $G'$ has a $(0, \pi_0,  \pi_1,  ...,  \pi_k)$-partition $(V'_1, V'_2)$. Then $w_1, w_2, ..., w_{m+1}$ belongs to the same part because if not, $|\{u_1, u_2,...,u_{k}, w_i \} \cup V_1|=|\{u_1, u_2,...,u_{k}, w_j \}|+1$ for some $1 \leq i, j \leq m+1$ and it is contradiction to $\pi$ has no consecutive *'s. By the same reason, $w_{m+2}, ..., w_{k+1}$ belongs to the same part. Since $\pi_m=*$ but $\pi_{k-m-1}=0$, $w_i \in V'_1$ for $1 \leq i \leq m+1$ and $w_i \in V'_2$ for $m+2 \leq i \leq k+1$. Therefore, $|\{v_1, ...,v_k, w_1\} \cap V'_1|=i$ if and only if $|\{v_1, ..., v_k\} \cap V'_1|=i-1$. It shows $(V'_1 \cap V, V'_2 \cap V)$ is a $\pi$-partition of $G$. Conversely, if $G$ has a $\pi$-partition $(V_1, V_2)$, then $(V_1 \cup \{w_1, ..., w_{m+1}\} \cup \{u_1, ..., u_m\} \cup \{u_{k+1}, ..., u_{k+m+1}\}, V_2 \cup \{w_{m+2}, ..., w_{k+1}\} \cup \{u_{m+1}, ..., u_{k}\} \cup \{u_{k+m+2}, ..., u_{2k}\})$ is a $(0,$ $ \pi_0,  \pi_1,  ...,  \pi_k)$-partition of $G'$.

\end{proof}
\begin{cor}\label{cor2}
Let $k \geq 3$ be an integer and $\pi \in \{0, 1, *\}^{k+1}$ is $1$-free.
If $\pi$ contains exactly one * and $\pi_0=\pi_k=0$, then the $\pi$-\textsc{partition} is NP-complete.
\end{cor}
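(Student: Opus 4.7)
The plan is to induct on $k$, using the NP-completeness of $(0,*,0,0)$-\textsc{partition} (Proposition~\ref{thm24}) as the anchor. Every vector in the hypothesis class has the shape $\pi=(0^a,*,0^b)$ with $a,b\geq 1$ and $a+b=k$, and by Lemma~\ref{lem1} the problem is unchanged under reversal, so I may assume $a\geq b$. Together with $k\geq 3$, this forces $a\geq 2$.

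For the base case $k=3$ the normalization gives $\pi=(0,0,*,0)$, which is equivalent by Lemma~\ref{lem1} to $(0,*,0,0)$ and hence NP-complete by Proposition~\ref{thm24}. For the inductive step $k\geq 4$, set $\pi':=(0^{a-1},*,0^b)$. This is a vector of length $k$, so it corresponds to $(k-1)$-uniform hypergraphs; its first and last entries are $0$, it contains exactly one $*$, and $(a-1)+b=k-1\geq 3$. Hence $\pi'$ satisfies the hypotheses of the corollary for $k-1$, and the inductive hypothesis gives that $\pi'$-\textsc{partition} is NP-complete.

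To close the induction, I would apply Proposition~\ref{thm_add0} to $\pi'$; the hypotheses ``$1$-free'' and ``no consecutive $*$'s'' hold automatically because $\pi'$ has no $1$'s and exactly one $*$. This yields a polynomial-time reduction from $\pi'$-\textsc{partition} to $(0,\pi'_0,\pi'_1,\ldots,\pi'_{k-1})$-\textsc{partition}, and the new vector is $(0,0^{a-1},*,0^b)=(0^a,*,0^b)=\pi$. Thus $\pi$-\textsc{partition} is NP-complete.

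There is no substantive obstacle here; the work is pure bookkeeping to verify that $\pi'$ lies in the inductive hypothesis class and that Proposition~\ref{thm_add0} applies, and both are immediate once $a\geq 2$ and $k\geq 4$ are secured by reducing to the case $a\geq b$.
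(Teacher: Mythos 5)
Your proposal is correct and follows essentially the same route as the paper, which proves the corollary by combining Proposition~\ref{thm24} (the base case $(0,*,0,0)$), Proposition~\ref{thm_add0} (prepending a $0$), and Lemma~\ref{lem1} (reversal); you have merely organized these three ingredients into an explicit induction on $k$ and verified the bookkeeping that the paper leaves implicit.
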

\begin{proof}
By Proposition \ref{thm24}, Proposition \ref{thm_add0} and Lemma \ref{lem1}, it is clear.
\end{proof}
Now, we will prove Theorem \ref{thm4_uni} and Theorem \ref{thm5_uni}.
\begin{proof}[Proof of Theorem \ref{thm4_uni}]
By Theorem \ref{thm0**0}, Proposition \ref{thm24} and Proposition \ref{thm25}, the $(0, *, *, *, 0)$-\textsc{partition}, the $(0, 0, *, *, 0)$-\textsc{partition} and the $(0, *, *, 0, 0)$-\textsc{partition} are NP-complete. By Corollary \ref{cor2}, the $(0, *, 0, 0, 0)$-\textsc{partition}, the $(0, 0, *, 0, 0)$-\textsc{partition} and the $(0, 0,$ $ 0, *, 0)$-\textsc{partition} are NP-complete. By Proposition \ref{thm_alternating}, the $(0, *, 0, *, 0)$-\textsc{partition} is polynomial-time solvable. By Proposition \ref{thm23} and considering trivial cases, remaining cases are polynomial-time solvable. By combining these results, we get the proof of Theorem \ref{thm4_uni}.
\end{proof}
\begin{proof}[Proof of theorem \ref{thm5_uni}]
By applying Proposition \ref{thm24} and Proposition \ref{thm25} to result of Theorem \ref{thm4_uni}, we get all NP-completeness of Theorem \ref{thm5_uni}. Remaining cases are polynomial-time solvable by Proposition \ref{thm24} and by considering trivial cases. It proves Theorem \ref{thm5_uni}.
\end{proof}

\section*{Acknowledgement}

The author would like to thank Jaehyun Koo for pointing out the partitioning problem of a split graph and thank Prof.~Sang-il Oum for very helpful advice. 

\bibliographystyle{plain}
\bibliography{cites}

\end{document}